\documentclass[11pt,english]{scrartcl}
\usepackage[T1]{fontenc}
\usepackage[latin9]{inputenc}
\usepackage{geometry}
\geometry{verbose,tmargin=0.1\paperheight,bmargin=0.1\paperheight,lmargin=0.13\paperwidth,rmargin=0.13\paperwidth}
\setcounter{tocdepth}{2}
\setlength{\parskip}{\medskipamount}
\setlength{\parindent}{0pt}
\usepackage{babel}
\usepackage{amsthm}
\usepackage{amsmath}
\usepackage{amssymb}
\usepackage{esint}
\usepackage{color}
\usepackage{stmaryrd}
\usepackage[unicode=true,
 bookmarks=true,bookmarksnumbered=true,bookmarksopen=true,bookmarksopenlevel=5,
 breaklinks=false,pdfborder={0 0 0},backref=false,colorlinks=false]
 {hyperref}
\hypersetup{pdftitle={Short-term Uncertainty},
 pdfauthor={Arturo Valdivia},
 pdfsubject={Credit Risk}}

\usepackage{float}
\usepackage{graphicx}
\usepackage{epstopdf}
\makeatletter
\theoremstyle{plain}
\newtheorem{thm}{\protect\theoremname}[section]
  \theoremstyle{remark}
  \newtheorem{rem}[thm]{\protect\remarkname}
  \theoremstyle{plain}
  \newtheorem{lem}[thm]{\protect\lemmaname}
  \theoremstyle{plain}
  \newtheorem{prop}[thm]{\protect\propositionname}
  \theoremstyle{definition}
  \newtheorem{example}[thm]{\protect\examplename}

\makeatother

  \providecommand{\examplename}{Example}
  \providecommand{\lemmaname}{Lemma}
  \providecommand{\propositionname}{Proposition}
  \providecommand{\remarkname}{Remark}
\providecommand{\theoremname}{Theorem}

\begin{document}

\title{CoCos under short-term uncertainty\thanks{To appear in \emph{Stochastics}. Final version will be available at http://dx.doi.org/10.1080/17442508.2016.1149590}}

\author{Jos\'e Manuel Corcuera%
\thanks{University of Barcelona, Gran Via de les Corts Catalanes, 585, E-08007
Barcelona, Spain. E-mail: \texttt{jmcorcuera@ub.edu}.%
}, \and Arturo Valdivia%
\thanks{Universitat de Barcelona, Gran Via de les Corts Catalanes, 585, E-08007
Barcelona, Spain. E-mail:\texttt{arturo.valdivia@ub.edu}%
}.}

\maketitle

\begin{abstract}
In this paper we analyze an extension of the Jeanblanc
and Valchev \cite{JeaValc} model by considering a short-term
uncertainty model with two noises. It is a
combination of the ideas of Duffie and Lando \cite{DuffieLando} and
Jeanblanc and Valchev \cite{JeaValc}:\ share quotations of the firm
are available at the financial market, and these can be seen as noisy
information about the fundamental value, or the firm's asset, from which a
low level produces the credit event. We assume there are also reports of the
firm, release times, where this short-term uncertainty
disappears. This
credit event model is used to describe conversion and default in a CoCo bond.
\vspace{1cm}

 \textbf{JEL Classification}: G11 G12 G13 G18 G21 G32
\end{abstract}

\section{\textrm{\textmd{\normalsize{}\label{sec: short term uncertainty}}}Introduction}


A contingent convertible (CoCo) is a bond issued by a financial institution
such that, upon the appearance of a credit event, an automatic conversion into a predetermined number of shares
takes place.  This credit event is related to a possible distress period of the
institution, and thus the conversion intends to be a loss absorbing security, in the sense that in
case of liquidity difficulties it produces a recapitalization of the entity. An alternative to the bond conversion into shares is to force a partial write-down of the bond's face value. We shall however only consider the first type of conversion here.

There is disagreement about how to establish the trigger or credit event. It
is perhaps the most controversial issue in a CoCo. Some advocate conversion
based on book values, such as the different capital ratios used in Basel III.
Others defend \emph{market triggers} such as the market value of the equity. So far
the CoCos issued by the private sector are based on accounting ratios.

From a modelling point of view, and depending on the trigger chosen for the
conversion, one can follow different approaches: a structural approach which links conversion to
a low level of a certain index related to the firm's asset, debt or equity; or a reduced-form approach assuming a conversion intensity that
can depend on certain explicative factors. This latter approach is specially
useful when \emph{pricing} CoCos is the main interest, it is a kind of
statistical modelling of the conversion trigger. In fact what one models is
the law of the conversion time. On the other hand, from the structural approach for modelling the
trigger one models the random variable describing the conversion time, and
one relates it to the dynamics of the firm's assets, debt, or equities. It is a
more explanatory approach where one can use the observed dynamics of
certain economics facts in order to explain the conversion time.

Conversion is a kind of credit event which is similar to default and
empirical facts show that such events happen all of a sudden, in
such a way that bond prices drop precipitously in a way not completely expected.
 This behavior, that is compatible with the existence of an
intensity for the credit event, is observed from the fact that yield spreads
are strictly positive at zero maturity. Structural models where the default
is linked to the first time that a diffusion process falls below a certain
level do not reproduce this behavior. Their corresponding default times are
predictable, the yield spreads go to zero when the time to maturity goes to
zero, they are models with zero intensity. Of course, we can consider a
jump-diffusion but then there is not much difference when modelling the
jumps of the diffusion from modelling the intensity, see, for instance, Zhou \cite{zhou2001term}. Another way of explaining the inaccessibility  of the default time is
that the information about the process involved in the default or conversion
is not perfect, maybe because the firm's assets are not observable, or they are
observable only at certain times, maybe because there is certain delay or
noise in the accounting reports, etc. Note that this is in fact the most
sensible and explanatory way of modelling the credit events. Credit events
have their own reasons linked to the economic behavior of the firm or one
sector, but at the same time only experts or managers are aware about
the current behavior of the firm or sector and market participants have
only noisy or delayed information. As a result the model, being structural
from the point of view of managers becomes a reduce form model from the
market view. A precedent for this approach is Duffie and Lando \cite{DuffieLando}
who is the first serious attempt to study the consequences of
noisy information in structural models. Another precedent is Jarrow et al. \cite{CetinEtAl}, where the market
information is a reduction of the manager's information, in particular they
assume that a credit event happens after a long negative cash balance
situation followed by a drop that duplicates the bad cash balance, however
market only sees the sign of cash balances and the time of default. This latter model is
certainly a very particular one that takes advantage of the behavior of
Brownian excursions but its extension, though appealing, is, as far as we
know, an open problem. Finally the third important reference is Jeanblanc
and Valchev \cite{JeaValc} where they discuss the effect of different types of
partial information, allowing some update times at which the information becomes complete.
This motivates the idea of what we call a \emph{short-term uncertainty model}, that is, a model
where the uncertainty in the observation is only present in between update times.

In this paper we analyze an extension of the Jeanblanc
and Valchev \cite{JeaValc} model by considering a short-term
uncertainty model with two noises. It is a
combination of the ideas of Duffie and Lando \cite{DuffieLando} and
Jeanblanc and Valchev \cite{JeaValc}:\ share quotations of the firm
are available at the financial market, and these can be seen as noisy
information about the fundamental value, or the firm's asset, from which a
low level produces the credit event. We assume there are also reports of the
firm, release times, where this short-term uncertainty
disappears. This
credit event model is used to describe conversion and default in a CoCo bond.

\section{Pricing CoCos}
Every pricing model for a CoCo starts by defining the mechanisms that
trigger default and conversion. From a structural modelling approach
we can say, in general terms, that CoCo's conversion and the issuer's
ulterior default are triggered as soon as a \emph{fundamental process}
$(U_{t})_{t\geq0}$ crosses, respectively, the levels $\bar{c}$ and
$\underline{c}$ , \emph{i.e.},
\begin{equation}\label{eq: conv and def}
\tau=\inf\{t\geq0\;:\; U_{t}\leq\bar{c}\}\qquad\text{and}\qquad\delta=\inf\{t\geq0\;:\; U_{t}\leq\underline{c}\}.
\end{equation}

The subsequent difference between models relies on how this fundamental
process is defined in order to serve as proxy for the regulatory capital,
and how the correspondent prices are evaluated. A prominent particular
choice for $(U_{t})_{t\geq0}$ is the given by
\begin{equation}
U_{t}:=\log\frac{S_{t}}{\ell_{t}},\qquad t\geq0,\label{eq: fundamental process}
\end{equation}
where $(S_{t})_{t\geq0}$ stands for the issuer's share price, and
$(\ell_{t})_{t\geq0}$ models the main benchmark for the issuer performance.
The importance of this process has been already addressed in earlier
literature on credit risk as in Longstaff and Schwarz \cite{LongstaffSchwartz}
and Sa\'a-Raquejo and Santa-Clara \cite{SaaRaquejoSantaClara}. The
process in (\ref{eq: fundamental process}) is also referred to as
\emph{log-leverage }or \emph{solvency ratio} process. We use the term
\emph{fundamental} since all credit events are assumed to be triggered
by the movements of $(U_{t})_{t\geq0}$ to a series of critical barriers.
Furthermore, the correspondent pricing formulas will be obtained
in terms of $(U_{t})_{t\geq0}$, not in terms of $(S_{t})_{t\geq0}$
and $(\ell_{t})_{t\geq0}$ independently.

Traditionally, structural models operate under two important but arguable
assumptions:
\begin{description}
\item [{\emph{$(\mathbf{A}_{1})$}}] The correlation between the noises
driving the share price and $(U_{t})_{t\geq0}$ is constant.
\item [{\emph{$(\mathbf{A}_{2})$}}] The fundamental process is fully observable
in continuous time.
\end{description}
However, in practice, it seems more reasonable to assume that this
is not the case, and assume instead the presence of a\emph{ short-term
uncertainty} in the following sense:
\begin{description}
\item [{\emph{$(\mathbf{A}'_{1})$}}] The correlation between the noises
driving the share price and $(U_{t})_{t\geq0}$ may vary.
\item [{\emph{$(\mathbf{A}'_{2})$}}] The fundamental process is fully
observable only at predetermined dates $(T_{j})_{t\in\mathbb{Z}_{+}}$.
\end{description}
Notice that, in some sense, the assumption $(\mathbf{A}'_{2})$ accounts
for the fact that in most cases regulatory capital depends on the
balance sheets of the firm issuing the CoCo, and those sheets are
updated only at series of predetermined dates $(T_{j})_{t\in\mathbb{Z}_{+}}$.
On the other hand, it seems more natural to assume $(\mathbf{A}'_{1})$
and include the correlation $\rho$ between the share price and the
fundamental process as another rightful model parameter. From now on we shall exclude the case where we have a perfect correlation between the stock and the fundamental process, \emph{i.e.}, the case where $\rho=1$.

Working under the short-term uncertainty has two immediate consequences.
On the one hand, and deviating from other structural models, the assumption
$(\mathbf{A}'_{1})$ prevents default and conversion times from being
stopping times with respect to the filtration generated by the relevant
state variables (\emph{e.g.}, share price, interest rates, total assets
value,...). On the other hand, assumption $(\mathbf{A}'_{2})$ implies
an information structure which is different from other partial information
models as Coculescu et al. \cite{CocGemJea}, Collin-Dufresne et al. \cite{CDufGolHel},
Duffie and Lando \cite{DuffieLando}, and Gou et al. \cite{GouJarZen}.
The work of Jeanblanc and Valchev \cite{JeaValc} is closer to our
model and, as mentioned in the introduction, this work combines techniques of \cite{DuffieLando} and \cite{JeaValc}, giving in some sense a two variate  extension of \cite{JeaValc}.
\section{Short-term uncertainty}\label{sec: short-term uncertainty}

\subsection{Introducing assumption ($\mathbf{A}_{1}'$).}

We shall assume that the $n$-dimensional $\mathbb{P}^{*}$-Brownian
motion $(W_{t}^{*}:=(W_{1,t}^{*},...,W_{n,t}^{*}))_{t\geq0}$ is the
noise driving all relevant state variables (\emph{e.g.}, share price,
interest rates, total assets value,...). In particular, we shall assume
that the $\mathbb{P}^{*}$-dynamics of the share price $(S_{t})_{t\geq0}$
be given by
\begin{equation}
\frac{\mathrm{d}S_{t}}{S_{t}}=r_{t}\mathrm{d}t+\sigma_{t}\cdot\mathrm{d}W_{t}^{*},\label{eq: share price}
\end{equation}
where we have used $(r_{t})_{t\geq0}$ to denote the interest rate,
and the $(\sigma_{t})_{t\geq0}$ is positive $\mathbb{R}^{n}$-valued
c\`adl\`ag process, integrable with respect to $(W_{t}^{*})_{t\geq0}$.
Hereafter the symbol $\cdot$ denotes the dot product in $\mathbb{R}^{n}$.

Additionally, given a correlation function $\rho:\mathbb{R}_{+}\rightarrow[-1,1]$,
let $(W_{t}^{\rho}:=(W_{j,t}^{\rho},...,W_{n,t}^{\rho}))_{t\geq0}$
be a second $\mathbb{P}^{*}$-Brownian motion satisfying $\mathrm{d}W_{j,t}^{*}\mathrm{d}W_{j,t}^{\rho}=\rho_{j}(t)\mathrm{d}t$.
In these terms, we consider the parametric family
\begin{equation}
\mathrm{d}U_{t}=\mu_{t}\mathrm{d}t+\sigma'_{t}\cdot\mathrm{d}W_{t}^{\rho},\label{eq: parametric U}
\end{equation}
where $(\sigma'_{t})_{t\geq0}$ is positive $\mathbb{R}^{n}$-valued
c\`adl\`ag process, integrable with respect to $(W_{t}^{\rho})_{t\geq0}$.
Let us note that at this point we don't assume any specific model
for volatilities $(\sigma_{t})_{t\geq0}$ and $(\sigma'_{t})_{t\geq0}$.

We shall further assume that all credit events are triggered by the
movements of $(U_{t})_{t\geq0}$ to a series of critical barriers.
More specifically, the conversion and default times will be given
respectively by
\begin{equation}
\tau=\inf\{t\geq0\;:\; U_{t}\leq\bar{c}\}\qquad\text{and}\qquad\delta=\inf\{t\geq0\;:\; U_{t}\leq\underline{c}\},\label{eq: parameteric tau and delta}
\end{equation}
For simplicity in the exposition we shall assume that no coupons are attached to the CoCo. However, we remark that the coupon cancellation feature introduced in \cite{Cocacoco}
can be included in a natural way by establishing that the $j$-th coupon will be paid at
time $T_{c_j}$ if and only if the $\tau_{j}(\rho)>T_{c_j}$ where
\begin{equation}
\tau_{j}:=\inf\{t\geq0\;:\; U_{t}\leq\bar{c}_{j}\},\qquad j=1,...,m.\label{eq: parameteric tau_j}
\end{equation}
It will be assumed that constants are ordered as $\underline{c}<\bar{c}\leq\bar{c}_{m}\leq...\leq\bar{c}_{1}$
so that the coupon cancellation, conversion and default may only occur
in a sequenced way.

\subsection{Introducing assumption ($\mathbf{A}_{2}'$).}

Let the set $\{T_{j},\;j\in\mathbb{Z}_{+}\}$ denote the times at which we fully observe $(W_{t}^{\rho})_{t\geq0}$. In addition, we assume that $(W_{t}^{*})_{t\geq0}$ is observed at times $\{t_{i,j}, \; i=0,...,n,\;j\in\mathbb{Z}_{+}\}$ where we set $T_0:=0$ and $t_{i,0}=T_i$. Let us define
\[
\left\lfloor t\right\rfloor :=\min\{T_{j}\in\{T_{0},T_{1},T_{2},...\}\;:\; T_{j}\leq t<T_{j+1}\},\qquad t\geq T_{0}.
\]
Let $\mathbb{F}^{W^{\rho}}=(\mathcal{F}_{t}^{W^{\rho}})_{t\geq0}$ and $\mathbb{F}^{W^{*}}=(\mathcal{F}_{t}^{W^{*}})_{t\geq0}$ denote
the natural filtration generated by $(W_{t}^{\rho})_{t\geq0}$ and $(W_{t}^{*})_{t\geq0}$, respectively. In these terms,
our reference information about the two noises $(W_{t}^{*})_{t\geq0}$ and $(W_{t}^{\rho})_{t\geq0}$ is given by $\widetilde{\mathbb{F}}=(\widetilde{F}_t)_{t\geq0}$ with
\[
\widetilde{\mathcal{F}}_{t}:=\mathcal{F}_{\left\lfloor t\right\rfloor}^{W^{\rho}}\vee\sigma(\{W^{*}_{t_{ij}},t_{ij} \leq t\}),\qquad t\geq0.
\]
Notice that $\mathcal{F}_{\left\lfloor t\right\rfloor}^{W^{\rho}}=\mathcal{F}_{t}^{W^{\rho}}$
for every $t\in\{T_{j}, \; j\in\mathbb{Z}_{+}\}$, but $\mathcal{F}_{t}^{W^{\rho}}\varsupsetneq\mathcal{F}_{\left\lfloor t\right\rfloor}^{W^{\rho}}$
otherwise. Thus in our model there is a noise which clears out at
update times $(T_{j})_{j\in\mathbb{Z}_{+}}$. Further, in between two update times, say $T_j$ and $T_{j+1}$, the correlated process $(S_t)_{t\geq0}$ provides a noisy information about $(U_t)_{t\geq0}$ by means of the observations $\{S_{t_{j,0}},S_{t_{j,1}}...,S_{t_{j,n}}\}$, where $T_j=t_{j,0}\leq...\leq t_{j,n}=T_{j+1}$.

As already pointed out in the introduction, there are some differences
between our model and other related approaches dealing with incomplete
information in the credit risk literature. For instance, Coculescu
et al. \cite{CocGemJea} consider a similar setting in which instead
of $(U_{t})_{t\geq0}$ they only observe a correlated process $(U_{t}^{\rho})_{t\geq0}$.
However, their observation of $(U_{t}^{\rho})_{t\geq0}$ is continuous,
whereas we only fully observe it at times $(T_{j})_{t\in\mathbb{N}}$.
Similar arguments apply for \cite{CDufGolHel} and Duffie and Lando
\cite{DuffieLando}. On the other hand, our model also differs from
that in Gou et al. \cite{GouJarZen} since the proposed update at times
disrupts the permanent delay in the arrival of information considered in  \cite{GouJarZen}.

On the other hand, our model also deviates from traditional structural
models since the credit events in (\ref{eq: parameteric tau and delta})
and (\ref{eq: parameteric tau_j}) are no longer stopping times with
respect to the reference filtration $\widetilde{\mathbb{F}}$. We shall consider the full market information $\mathbb{G}=(\mathcal{G}_t)_{t\geq 0}$ as given by $\mathcal{G}_t:=\widetilde{\mathcal{F}}_t\vee\mathcal{H}_t$ where $\mathcal{H}_t := \sigma(\textbf{1}_{\{\tau \leq s\}},\; s\leq t)$.
\begin{rem}
The short-term uncertainty model presented here can be readily stated
in a more general way, for instance by replacing $(W_{t}^{*})_{t\geq0}$
by a more general L\'evy process. Since here we are interested to obtain  analytical price formulas, we shall not consider
the possibility of including jumps in the share price (nor the fundamental
process) dynamics. However, we note here that a numerical study in
the presence of jumps can be conducted, for instance, using the techniques
in Metwally and Atiya\cite{MetAti}, Ruf and Scherer \cite{RufScherer}
or Hieber and Scherer \cite{HieScherer}.
\end{rem}

\subsection{Compensator of $\tau$}
Define the process $(F(t):=\mathbb{P}^*(\tau\leq t|\widetilde{\mathcal{F}}_{t}))_{t\geq0}$. In a standard structural model, $\tau$ would be a stopping time with respect to the reference filtration $\widetilde{\mathbb{F}}$, leading to the identity $F(t)=\textbf{1}_{\{\tau\leq t\}}$. Under the Assumption $\mathbf{A}_{2}'$, however, $\textbf{1}_{\{\tau\leq t\}}$ no longer belongs to $\widetilde{\mathcal{F}}_t$. Instead we can find the compensator of $\tau$ with respect to  $\widetilde{\mathbb{F}}$. Indeed, notice that $t\leq s$ implies $\{\tau\leq t\}\subseteq\{\tau\leq s\}$,
and
\[
\mathbb{E}^*[F(s)|\widetilde{\mathcal{F}}_{t}]=\mathbb{E}^*\left[\mathbb{P}(\tau\leq s|\widetilde{\mathcal{F}}_{s})\Big|\widetilde{\mathcal{F}}_{t}\right]=\mathbb{P}^*(\tau\leq s|\widetilde{\mathcal{F}}_{t})\geq\mathbb{P}^*(\tau\leq t|\widetilde{\mathcal{F}}_{t})=F(t).
\]
Consequently, $(F_{t})_{t\geq0}$ is an $\widetilde{\mathbb{F}}$-submartingale
and thus it admits a Doob-Meyer decomposition, \emph{i.e.}, it can
be written as $F(t)=M_{t}+A_{t}$ where $(M_{t})_{t\geq0}$ is an
$\widetilde{\mathbb{F}}$-martingale and $(A_{t})_{t\geq0}$ is a
$\widetilde{\mathbb{F}}$-predictable increasing process. When the
compensator $(A_{t})_{t\geq0}$ is known, we can proceed further and
say that process
\[
\textbf{1}_{\{\tau\leq t\}}-\int_{0}^{t\wedge\tau}\frac{\mathrm{d}A_{s}}{1-F(s^{-})},\qquad t\geq0,
\]
follows an $\widetilde{\mathbb{F}}$-submartingale. It turns out that,
similarly to \cite[Lemma 1]{JeaValc}, we have the following.

\begin{lem}\label{pro: compensator}The compensator of $(F(t))_{t\geq0}$ is given by
\[
A_{t}:=F(t)-\sum_{j:\; T_{j}\leq\left\lfloor t\right\rfloor }\Delta F(T_{j}),\qquad t\geq0.
\]
\end{lem}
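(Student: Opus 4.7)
The plan is to verify the claim by identifying the martingale part of $F$ explicitly and invoking uniqueness of the Doob--Meyer decomposition. Writing
\[
M_t := \sum_{j:\; T_j \leq \lfloor t \rfloor}\Delta F(T_j),
\]
so that $A_t = F(t) - M_t$, it suffices, since $F$ is bounded and hence of class $(D)$, to show that $M$ is an $\widetilde{\mathbb{F}}$-martingale and $A$ is $\widetilde{\mathbb{F}}$-predictable of integrable variation; $A$ will then automatically be the unique predictable increasing compensator.

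The argument rests on the behaviour of $F$ at the update times, in direct analogy to \cite[Lemma 1]{JeaValc}. Since $\mathcal{F}_{T_j}^{W^\rho} \subseteq \widetilde{\mathcal{F}}_{T_j}$ and $U$ is adapted to $\mathbb{F}^{W^\rho}$, the event $\{\tau \leq T_j\}$ lies in $\widetilde{\mathcal{F}}_{T_j}$, so $F(T_j) = \mathbf{1}_{\{\tau \leq T_j\}}$. The continuous diffusion dynamics of $U$ imply $\mathbb{P}^*(\tau = T_j) = 0$, which together with the right-continuity of $F$ as a conditional probability gives $F(T_j^-) = \mathbb{E}^*[\mathbf{1}_{\{\tau \leq T_j\}} \mid \widetilde{\mathcal{F}}_{T_j^-}]$. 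Subtracting delivers the key identity
\[
\mathbb{E}^*\bigl[\Delta F(T_j) \bigm| \widetilde{\mathcal{F}}_{T_j^-}\bigr] = 0.
\]

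With this in hand, the martingale property of $M$ follows by iterated conditioning over the update times falling in an increment $(s, \lfloor t \rfloor]$, because $M$ is a pure-jump process whose jumps occur only at the deterministic times $T_j$ and each satisfies the above zero-mean property. Next, $A_t = F(t) - M_t$ is continuous at every $T_j$, since $\Delta A(T_j) = \Delta F(T_j) - \Delta M(T_j) = 0$, and its remaining jumps are located only at the deterministic observation grid points $t_{j,i}$; between such points, with the filtration $\widetilde{\mathbb{F}}$ held constant, $A$ coincides up to an additive constant with the non-decreasing conditional cdf $t \mapsto \mathbb{P}^*(\tau \leq t \mid \widetilde{\mathcal{F}}_{t_{j,i}})$. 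This yields predictability and finite variation of $A$, and invoking uniqueness of the Doob--Meyer decomposition concludes the proof.

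The main obstacle I anticipate is the verification of the no-atom property $\mathbb{P}^*(\tau = T_j) = 0$ together with the left-continuity identity $F(T_j^-) = \mathbb{E}^*[\mathbf{1}_{\{\tau \leq T_j\}} \mid \widetilde{\mathcal{F}}_{T_j^-}]$, since it is precisely this identity that turns the jump $\Delta F(T_j)$ into a martingale innovation. The argument leans essentially on the regularity of the continuous dynamics of $U$ which make the law of $\tau$ absolutely continuous, exactly as in the parallel step of \cite{JeaValc}.
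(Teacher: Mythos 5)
Your overall strategy is the same as the paper's: show that $M_t:=\sum_{j:\,T_j\le\lfloor t\rfloor}\Delta F(T_j)$ is an $\widetilde{\mathbb{F}}$-martingale through the key identity $\mathbb{E}^*[\Delta F(T_{j})\mid\widetilde{\mathcal{F}}_{T_j^-}]=0$ (full observation of $W^{\rho}$ at update times, together with the atomless law of the hitting time $\tau$), and then identify $A=F-M$ as the predictable part. That martingale step is correct, and your idea of recovering the increasing property from uniqueness of the Doob--Meyer decomposition, rather than proving it directly as the paper does, is a legitimate and arguably cleaner shortcut.

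The genuine gap is in your predictability step for $A$. You observe, correctly, that $F$ also jumps at the interior observation times $t_{j,i}$ (when $\rho\neq 0$ the newly revealed $W^*_{t_{j,i}}$ discretely revises $\mathbb{P}^*(\tau\le t_{j,i}\mid\cdot)$), and since $M$ only removes the jumps at the $T_j$, these interior jumps remain in $A$. But ``jumps located only at deterministic grid points'' does not yield predictability: for a c\`adl\`ag predictable process the jump at a deterministic (hence predictable) time must be measurable with respect to the strict past $\widetilde{\mathcal{F}}_{t_{j,i}^-}$. The same tower-property argument you use at the update times, combined with atomlessness of the law of $\tau$, gives $\mathbb{E}^*[\Delta F(t_{j,i})\mid\widetilde{\mathcal{F}}_{t_{j,i}^-}]=0$; so if $\Delta F(t_{j,i})$ were $\widetilde{\mathcal{F}}_{t_{j,i}^-}$-measurable it would have to vanish. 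In other words, any nonzero interior jumps are martingale-type innovations (and they can be negative, so $A$ would not even be pathwise increasing); they cannot sit inside a predictable compensator, and your appeal to uniqueness of the decomposition then has nothing to latch onto. The paper's own proof instead asserts that $A$ is continuous by construction, i.e.\ that $F$ has no jumps strictly between update times, and deduces predictability from continuity; that is precisely the property your argument would need to establish (or else the interior jumps would have to be absorbed into the martingale part, which would change the stated formula for $A$). As written, the predictability step of your proof fails.
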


\begin{proof}See Appendix \ref{sec: compensator}. \end{proof}

\section{Pricing CoCos under short-term uncertainty}
\subsection{The model for the fundamental process}

Let the $\mathbb{P}^{*}$-dynamics of $(B(t,T_{j}))_{t\geq0}$,\emph{
i.e.,} the price of a default-free bond with maturity $T$, are
given by
\begin{equation}
\mathrm{d}B(t,T)=B(t,T)\left(r_{t}\mathrm{d}t+b(t,T)\cdot\mathrm{d}W_{t}^{*}\right),\label{eq: PI general bond dynamics}
\end{equation}
where $(b(t,T))_{t\geq0}$ is an $\mathbb{F}^{W^{*}}$-adapted
process. As benchmark for the issuer performance we consider process
\begin{equation}
\ell_{t}:=LB(t,T)\exp\left\{ \int_{t}^{T}[\kappa(s)+a\left\Vert \sigma_{s}\right\Vert ^{2}]\mathrm{d}s\right\} ,\qquad t\geq0,\; a\in\mathbb{R}.\label{eq: PI general barrier}
\end{equation}

Notice that, by applying the It\^o formula, the log-leverage process
$(\log(S_{t}/\ell_{t}))_{t\geq0}$ satisfies
\begin{equation}
\mathrm{d}\log\frac{S_{t}}{\ell_{t}}=\left(a\left\Vert \sigma_{t}\right\Vert ^{2}-\frac{1}{2}\left\Vert \sigma_{t}\right\Vert ^{2}+\frac{1}{2}\left\Vert b(t,T)\right\Vert ^{2}\right)\mathrm{d}t+(\sigma_{t}-b(t,T))\cdot\mathrm{d}W_{t}^{*}.\label{eq: PI log-leverage process}
\end{equation}
Then we shall focus on the parametric family of fundamental processes given by
\begin{equation}
\mathrm{d}U_{t}:=\mathrm{d}\log\frac{S_{t}}{\ell_{t}}=\left(a\left\Vert \sigma_{t}\right\Vert ^{2}-\frac{1}{2}\left\Vert \sigma_{t}\right\Vert ^{2}+\frac{1}{2}\left\Vert b(t,T)\right\Vert ^{2}\right)\mathrm{d}t+(\sigma_{t}-b(t,T))\cdot\mathrm{d}W_{t}^{\rho},\label{eq: PI the fundamental process}
\end{equation}
where $\rho$ is our given deterministic correlation function.

It is important to remark that the barrier in (\ref{eq: PI general barrier})
conveys the two-folded appeal found in structural models in credit
risk: model features can be endowed with an economic interpretation,
and neat closed-form price formulas can be obtained in many cases.
Indeed, as argued by Bryis and De Varenne \cite{BryisDeVarenne},
$(\ell_{t})_{t\geq0}$ has many advantages when seen as a safety covenant.
In particular, regardless of the reorganization form, this barrier
allows to  define the bondholder's payoff upon default by relating
these payoffs to the level of the barrier. In this sense, if at time
$T$ the liabilities of the firm amount to the quantity $N$, then
our choice of the value $0\leq L\leq N$, models how protective this
safety covenant is. On the other hand, the factor $LB(t,T)$ exhibits
the barrier in (\ref{eq: PI general barrier}) as a direct extension
of the Merton \cite{Merton} and Black and Cox \cite{BlackCox} models
to a non-flat barrier with stochastic interest rates. Moreover, the
factor $\exp\{a\int_{t}^{T}\sigma^{2}(s)\mathrm{d}s\}$ provides the
extra parameter $a$ which allows us modify the exponential profile
of the barrier, by increasing its concavity and steepness, see Brigo
and Tarenghi \cite{BrigoAT1P}. Altogether, $(\ell_{t})_{t\geq0}$
can accomodate other models from the earlier credit risk literature
as Longstaff and Schwarz \cite{LongstaffSchwartz} and Sa\'a-Raquejo
and Santa-Clara \cite{SaaRaquejoSantaClara}. From the computational
point of view, let us mention that Lo et al. \cite{LoLeeHui} and
Rapisarda \cite{Rapisarda} are able to compute accurate closed-form
estimates for time-dependent Black-Scholes option prices by applying
the so-called \emph{mirror image approach} in order to solve the PDE
associated with the price. It is apparent that the only parametric
form of barrier that is compatible with the mirror image fundamental
solution is that in (\ref{eq: PI general barrier}), for deterministic
$(r_{t})_{t\geq0}$. Remarkably, \cite{LoLeeHui} present upper and
lower bounds to the barrier option price by applying the maximum
principle for the diffusion equation, which translates into a simple
and intuitive argument on the barrier profile.

Since our focus is on the fundamental process, and not on the barrier,
we have decided to translate Assumption $(\mathbf{A}_{1}')$ as the
correlation structure between the noises driving $(S_{t})_{t\geq0}$
and $(U_{t})_{t\geq0}$ . Of course we could alternatively introduce
a new stochastic factor in the barrier $(\ell_{t})_{t\geq0}$ ---say,
replacing $L$ by $(L_{t})_{t\geq0}$--- but an economical interpretation
for this new factor should be discussed first. Moreover, as argued
in \cite{LongstaffSchwartz}, taking a more complex barrier has the
inherent risk of resulting in a more involved model providing no addition
insight. Besides, as we shall see below, the obtainment of analytical
formulas depends explicitly on the vector $(U_{t},\log S_{t})$,
not separately on $(S_{t})_{t\geq0}$ and $(\ell_{t})_{t\geq0}$.

Moreover, this model accommodates also recent contributions on the study of CoCos such as \cite{BrigoCoco,Cocacoco}. Indeed we have the following.
\begin{example} \label{exa: brigo}
The Brigo et al. \cite{BrigoCoco} model corresponds to a one-dimensional
case, where $\sigma$ is a piecewise constant deterministic function,
and $a$ (in their notation $a:=B$) is given as one of the model
parameters. The correlation is assumed to equal to $1$, so that
\begin{equation*}\label{exa: brigo u}
\mathrm{d}U_{t}=(a-\tfrac{1}{2})\sigma^{2}(t)\mathrm{d}t+\sigma(t)\mathrm{d}W_{t}^{*}.
\end{equation*}

\end{example}

\begin{example} \label{exa: cocacoco}
The Corcuera et al. \cite{Cocacoco} model corresponds to a $n$-dimensional
case, with a possibly stochastic volatility $(\sigma_{t})_{t\geq0}$
and bond prices given as in the Gaussian HJM framework so that $b$
is deterministic. In this case correlation is also to equal to $1$,
and the fundamental process is given
\begin{equation*} \label{exa: cocacoco u}
\mathrm{d}U_{t}=\tfrac{1}{2}(-\left\Vert \sigma_{t}\right\Vert ^{2}+\left\Vert b(t,T)\right\Vert ^{2})\mathrm{d}t+(\sigma_{t}-b(t,T))\cdot\mathrm{d}W_{t}^{*}.
\end{equation*}
\end{example}

\subsection{The pricing problem}\label{sec: pricing problem}

The general price formula for a CoCo in our setting can be directly
derived from \cite[Lemma 1]{Cocacoco}. More specifically, for every
$0\leq t\leq T$, the price of a CoCo, on $\{\tau>t\}$, is given
by
\begin{eqnarray}
\pi(t) & := &
NB(t,T)\mathbb{P}^{T}(\tau>T|\;\mathcal{G}_{t})+\frac{C_{r}S_{t}}{\mathrm{e}^{\int_{t}^{T}\kappa(u)\mathrm{d}u}}\mathbb{P}^{(S)}(\tau\leq T|\;\mathcal{G}_{t}),\label{eq: the price}
\end{eqnarray}
where $(B(t,T))_{t\geq0}$ stands for the price of a default-free
bond with maturity $T$, and the $T$-forward measure $\mathbb{P}^{T}$
(resp. share measure $\mathbb{P}^{(S)}$) is the probability measure
given by taking $(B(t,T))_{t\geq0}$ (resp. $(S_{t})_{t\geq0}$)
as \emph{num\'eraire}. That is to say, these probability measures are equivalent
to $\mathbb{P}^{*}$ and their Radon-Nikod\'{y}m derivatives are
given by
\[
\frac{\mathrm{d}\mathbb{P}^{T}}{\mathrm{d}\mathbb{P}^{*}}=\frac{\mathrm{e}^{-\int_{0}^{T}r_{u}\mathrm{d}u}B(T,T)}{B(0,T)}=\exp\left\{ \int_{0}^{T}b(u,T)\mathrm{d}W_{u}^{*}-\frac{1}{2}\int_{0}^{T}b^2(u,T)\mathrm{d}u\right\} ,
\]
and
\[
\frac{\mathrm{d}\mathbb{P}^{(S)}}{\mathrm{d}\mathbb{P}^{*}}=\frac{\mathrm{e}^{-\int_{0}^{T_{j}}[r_{u}-\kappa(u)]\mathrm{d}u}S_{T}}{S_{0}}=\exp\left\{ \int_{0}^{T}\sigma_{u}\mathrm{d}W_{u}^{*}-\frac{1}{2}\int_{0}^{T}\sigma_{u}^{2}\mathrm{d}u\right\} .
\]
respectively.

Notice that according to (\ref{eq: the price}), in other to price a CoCo we need to find expressions for the distribution of $\tau$ under both $\mathbb{P}^{T}$ and $\mathbb{P}^{(S)}$. Therefore the rest of this work is devoted to study the obtainment of the aforementioned distributions.

\subsection{The distribution of $\tau$}
We shall consider a concrete case by assuming that the interest rate, the  correlation function and the stock's volatility are strictly positive constants, \emph{i.e.}, $r_t\equiv r$ and $\rho(t)\equiv \rho$, and $\sigma_t \equiv \sigma$.
In light of Proposition (\ref{pro: PI general dynamics}) in the Appendix, the we have the following dynamics of (2) and (3) under $\mathbb{P}^T$

\begin{equation}\label{eq: system under P*}
\begin{cases}
\mathrm{d}\log S_t  =   \mu_S^{*} \mathrm{d}t + \sigma \mathrm{d}W^{*}_t\\
\mathrm{d}U_t  =   \mu_U^{*} \mathrm{d}t + \sigma \mathrm{d}W^{\rho}_t\\
\end{cases}
\end{equation}
where $\mu_S^{*}:=r-\tfrac{1}{2}\sigma^2$ and $\mu^*_U :=(a-\tfrac{1}{2})\sigma^2$. And similarly under $\mathbb{P}^{(S)}$ we have
\begin{equation}\label{eq: system under P(S)}
\begin{cases}
\mathrm{d}\log {S}_t  =   \mu_S^{(S)} \mathrm{d}t + \sigma \mathrm{d}B_t\\
\mathrm{d}U_t  =   \mu_U^{(S)} \mathrm{d}t + \sigma \mathrm{d}B^{\rho}_t\\
\end{cases}
\end{equation}
where $\mu_S^{(S)} :=r+\tfrac{1}{2}\sigma^2$, $\mu_U^{(S)} :=(a-\tfrac{1}{2}+\rho)\sigma^2$, and $(B_t)_{t\geq0}$ and $(B_t)_{t\geq0}$ are two $\mathbb{P}^{(S)}$-Brownian motions with correlation $\rho$.  Let us now focus on the first summand of (\ref{eq: the price}) since the second can be computed in an analogous manner. Moreover, let us consider the case where the CoCo maturity coincides with the first time the fundamental process is fully updated, that is to say, we set $T=T_1$. Then for $0\leq t \leq T$, by conditioning to $\mathcal{F}^{W^{\rho}}_t\vee\sigma(\{W^{*}_{t_{ij}},t_{ij} \leq t\})\supseteq\mathcal{G}_t$, we can use a known result on Brownian motions hitting times (see \cite{bor12}) in order to get
\begin{equation}\label{eq: the conditioning step}
\mathbb{P}^{*}(\tau >T|\mathcal{G}_{t})=\mathbf{1}_{\{\tau >t\}}\left( \mathbb{E%
}^{*}\left[ \left. \Phi \left( -d^*_{-}\right) -\exp\left\{-\frac{2\mu_U^*(U_{t}-\bar{c})}{%
\sigma^{2}}\right\}\Phi \left( d^*_{+}\right) \right\vert \mathcal{G}_{t}\right]
\right),
\end{equation}
where $\bar{c}$ is the lower threshold for conversion (as in (\ref{eq: conv and def})) and
\[
d^*_{\pm }:=\frac{\bar{c}-U_{t}\pm \mu^*_U(T-t)}{\sigma\sqrt{T-t}}.
\]
The problem is thus reduced to compute the density of $U_t$ conditioned to $\mathcal{G}_t$. Suppose that up to time $t$ we have observations of the stock at times  $0=t_{0,0}\leq t_{0,1} \leq ... \leq t_{0,k}=t$, then we look for the density
\[
\mathbb{P}^{*}\left( U_t\in \mathrm{d}u_t\Big\vert \tau>t,S_{t_{0,0}}\in \mathrm{d}s_0,S_{t_{0,1}}\in \mathrm{d}s_1,...,S_{t_{0,k}}\in \mathrm{d}s_k\right).
\]
For ease of notation, for a process $(X_t)_{t\geq0}$ let us write $X_{j}:=X_{t_{0,j}}$, $X^{(k)}=(X_{0},X_{1},...,X_{k})$, $x^{(k)}=(x_{0},x_{1},...,x_{k})$. In these terms and using the Bayes rule we can rewrite the density above as
\begin{eqnarray*}
\mathbb{P}^{*}\left( U_t\in \mathrm{d}u_t\Big\vert \tau
>t,S^{(k)}\in \mathrm{d}s^{(k)}\right) &=&\frac{\mathbb{P}\left( \tau
>t,U_t\in \mathrm{d}u_t,S^{(k)}\in \mathrm{d}s^{(k)}\right) }{%
\mathbb{P}^{*}\left( \tau >t,S^{(k)}\in \mathrm{d}s^{(k)}\right) } \\
& = & \frac{\int_{\mathbb{R}^{k-1}}\mathbb{P}\left( \tau
>t,U^{(k)}\in \mathrm{d}u^{(k)},S^{(k)}\in \mathrm{d}s^{(k)}\right)
\mathrm{d}u^{(k-1)}}{\int_{\mathbb{R}^{k}}\mathbb{P}\left( \tau
>t,U^{(k)}\in \mathrm{d}u^{(k)},S^{(k)}\in \mathrm{d}s^{(k)}\right)
\mathrm{d}u^{(k)}}.
\end{eqnarray*}%
Using Theorem \ref{the result} in the Appendix (with $X:=U$, $Y:=\rho\log S$ and $Z:=X-Y$) we can show that actually
\begin{eqnarray*}
\lefteqn{\mathbb{P}^{*}\left(\tau>t,U_{k}^{(k)}\in\mathrm{d}u^{(k)},S^{(k)}\in\mathrm{d}s^{(k)}\right)} \\
 & = & \prod\limits _{j=1}^{k}\left(\left(1-\exp\left\{ -\frac{2(u_{j}-\bar{c})(u_{j-1}-\bar{c})}{\sigma^{2}(t_{0,j}-t_{0,j-1})}\right\}\right)\textbf{1}_{\{u_j>\bar{c},u_{j-1}>\bar{c}\}}\right) h^{*}_j \big(u_j - u_{j-1}-\rho\log \tfrac{s_{j}}{s_{j-1}}\big)g^{*}_j\big(\rho\log\tfrac{s_{j}}{s_{j-1}}\big)
\end{eqnarray*}
where $h^*_j$ and $g^*_j$   are the following Gaussian densities
\[
h^{*}_{j}(z):=\frac{1}{\sigma\sqrt{1-\rho^{2}}\sqrt{2\pi(t_{0,j}-t_{0,j-1})}}\exp\left\{ -\frac{\left(z-(\mu_U^{*}-\rho\mu_S^{*})(t_{0,j}-t_{0,j-1})\right)^{2}}{2\sigma^{2}(1-\rho^{2})(t_{0,j}-t_{0,j-1})}\right\} .
\]
\[
g^{*}_{j}(x):=\frac{1}{\sigma\rho\sqrt{2\pi(t_{0,j}-t_{0,j-1})}}\exp\left\{ -\frac{\left(x-\rho\mu_S^{*}(t_{0,j}-t_{0,j-1})\right)^{2}}{2\rho^{2}\sigma^{2}(t_{0,j}-t_{0,j-1})}\right\},
\]
In light (\ref{eq: system under P*}) and (\ref{eq: system under P(S)}), it is clear that the computation of $\mathbb{P}^{(S)}(\tau>T|\mathcal{G}_t)$ can be conducted in the same fashion, and thus we obtain an expression for the price in (\ref{eq: the price}).

It is worth noticing that, within each interval $[T_{j},T_{j+1})$,  the survival probabilities obtained by Jeanblanc
and Valchev \cite{JeaValc} depend only on the past value $S_{T_j}$. In our setting, however, these probabilities depend on the series $S_{t_{j,0}}$,...,$S_{t_{j,k}}$ where $T_j=t_{j,0}\leq t_{j,1} \leq ... \leq t_{j,k} \leq t$. This
difference relies on the fact that even though within each interval
$[T_{j},T_{j+1})$, our knowledge of the fundamental process is constant, we still observe the evolution
of all the other ($\mathbb{F}^{W^{*}}$-adapted) state variables at selected times.

If we compare the price in this model with that in a model where the fundamental process is observed, intuitively, we expect the following: the stock  is a proxy for the fundamental value in such a way that $(U_t)_{t \geq 0}$ moves around $(S_t)_{t\geq 0}$ and the price is an average, but the conversation time $\tau$ also says something about the behavior of $(U_t)_{t \geq 0}$, if  $\tau>t$ it means that $(U_t)_{t \geq 0}$ behaved better than $(S_t)_{t\geq 0}$,  especially if $(S_t)_{t\geq 0}$ is  low. So, in general,  we will get higher prices than in the models with total information and this will be more evident if $(S_t)_{t\geq 0}$ is low and/or its  correlation  with $(U_t)_{t \geq 0}$  is also low.

\begin{rem} For a discussion on how the pricing problem is changed by extending this \emph{base case} (\emph{i.e.}, with $r_t\equiv r$, $\rho(t)\equiv \rho$ and $\sigma_t \equiv \sigma$) to a model with stochastic interest rates or volatility and a time-varying correlation, we refer to \cite{Cocacoco}. More specifically,  \cite[Sections 4 and 5]{Cocacoco}  could be used to cover the case of stochastic interest rates and volatility, whereas \cite[Lemma 4]{Cocacoco} could provide an approximate formula when the correlation is time-varying. The main issue with these extensions relays on the fact that the expression within parentheses in (\ref{eq: the conditioning step}) cannot be obtained in closed-form as in the base case. Notice however that the base case discussed here can easily be extended a piecewise function, taking constant values within each interval $[T_{j},T_{j+1})$.
\end{rem}

\subsection{Numerical illustration}

For this part we fix the following parameters $r =  0.03$, $\sigma  =  .49$, $S_{0}  =  100$, $N  =  100$, $\bar{c}  = \log(35)$, $\kappa  =  0$, $\mu_{S}^{*}  =  r-\kappa-\frac{1}{2}$, and $\mu_{U}^{*}  =  \mu_{S}^{*}$, along with the four scenarios $\{\omega_{i}\}_{i=1,...,4}$ for the stock price depicted in the Figure 1. Then we shall estimate the effect
of the parameter $\rho$ on the probability $\mathbb{P}^{*}(\tau>t|\mathcal{G}_{t})$. We consider the following cases $\rho\in\{0.01,0.25,0.5,0.75,0.99\}$.
\begin{figure}[H]
\begin{centering}
\includegraphics[width=1\textwidth]{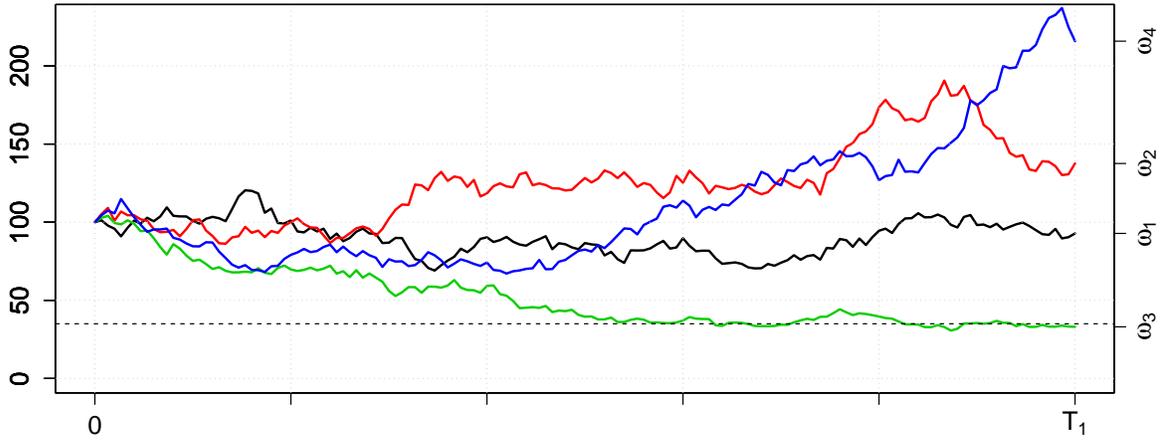}
\par\end{centering}

\protect\caption{Simulated trajectories of the stock price. The dashed line is set at level $\exp\{{\bar{c}\}}$.}
\end{figure}

Based on the scenario $(S_{t}(\omega_1))_{t\geq0}$ and different values for $\rho$, Figure 2 shows a series of simulated trajectories for the fundamental process $(U_t)_{t\geq 0}$. As the correlation parameter tends to 1, the behavior of   $(U_t)_{t\geq 0}$ matches that of the $(\log S_t)_{t\geq 0}$ (\emph{cf.} Example 4.1 and Example 4.2).

\begin{figure}[H]
\begin{centering}
\includegraphics[width=1\textwidth]{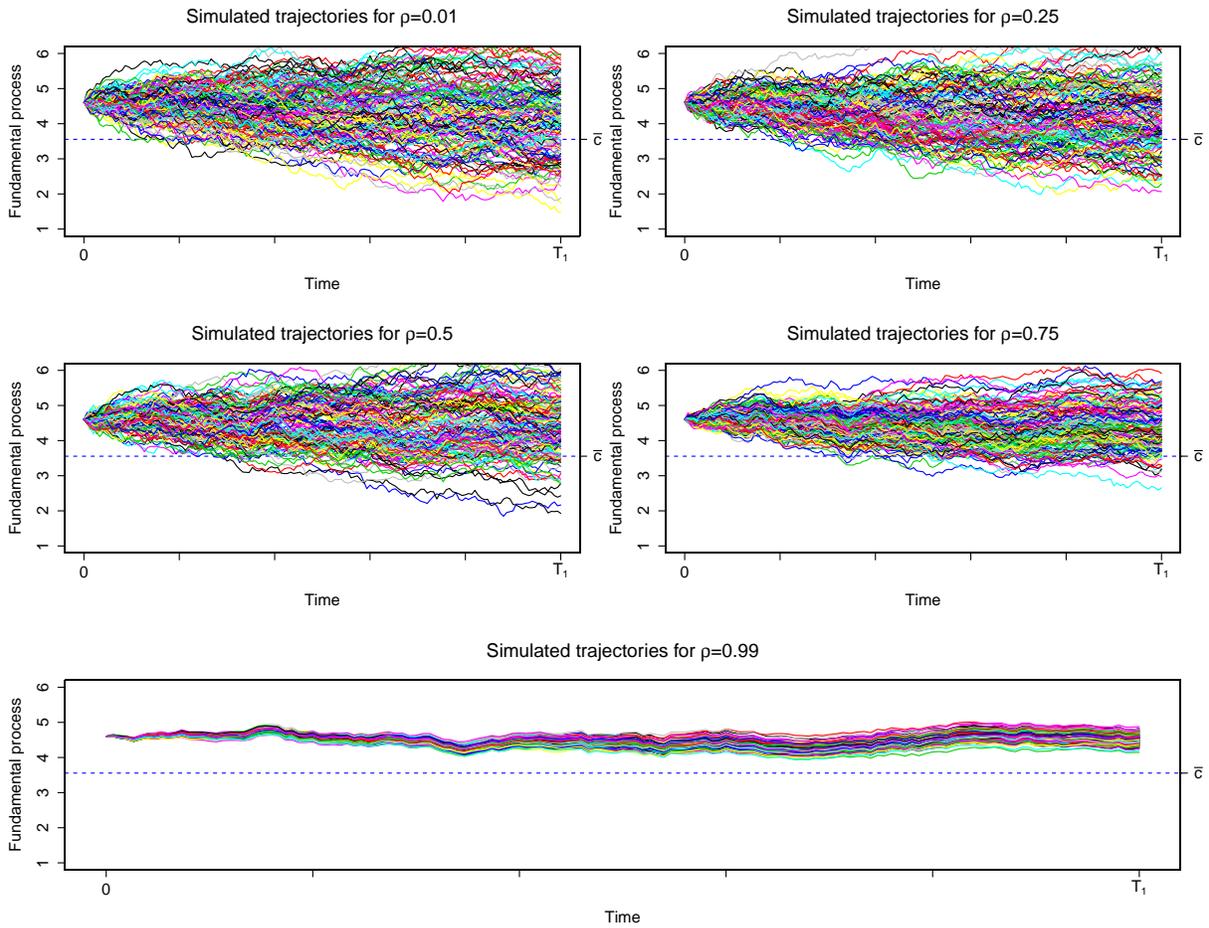}
\par\end{centering}

\protect\caption{Simulated trajectories of the fundamental process ased on the scenario $(S_{t}(\omega_1))_{t\geq0}$. The dashed line is set at level $\bar{c}$.}
\end{figure}

Finally Figure 3 shows the effect of the correlation parameter on the probability of avoiding conversion before $T_1$, that is, $\mathbb{P}^*(\tau>T_1 |\mathcal{G}_t)$.

\begin{figure}[H]
\begin{centering}
\includegraphics[width=1\textwidth,keepaspectratio]{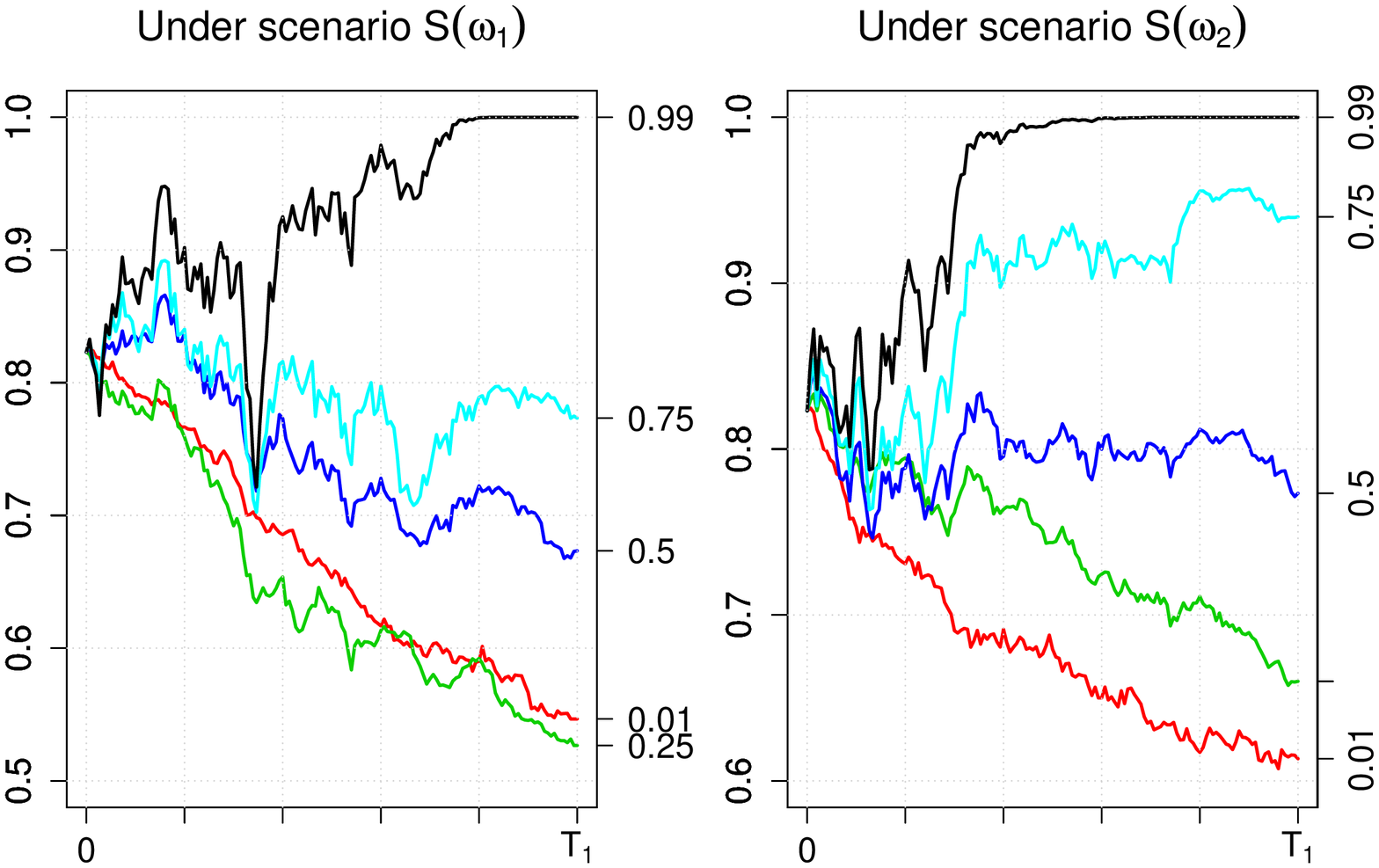}
\includegraphics[width=1\textwidth,keepaspectratio]{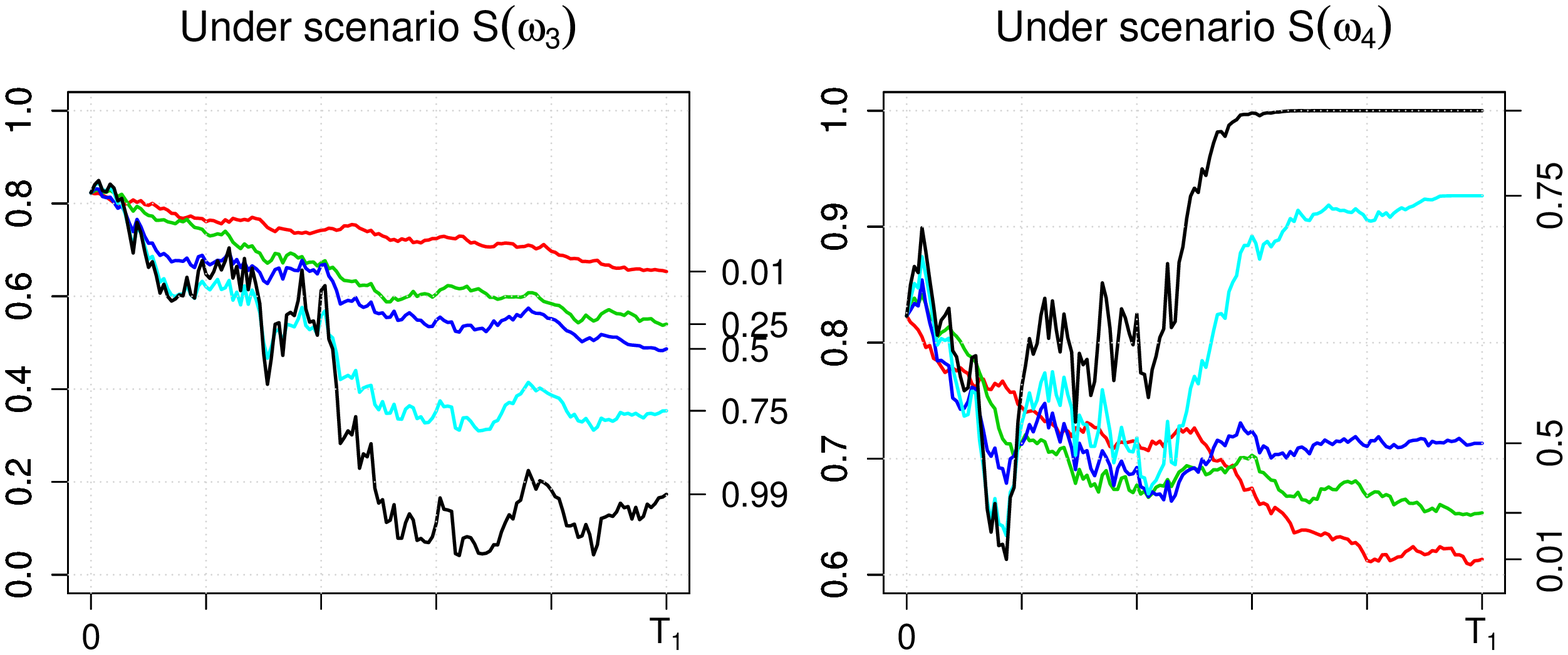}
\par\end{centering}

\protect\caption{Simulated values for $\mathbb{P}^*(\tau>T_1 |\mathcal{G}_t)$ based on different scenarios for the stock and correlation values (marks on the seconday axis).}
\end{figure}

\section*{Acknowledgements}
This work started while visiting the \textbf{Centre for Advanced Study} (CAS) at the Norwegian Academy of Science and Letters.  The authors would like to thank the kind hospitality offered by the members of the CAS.

\appendix
\section{Appendix}

\subsection{Proof of Lemma \ref{pro: compensator}}\label{sec: compensator}
Let us begin by showing that the process defined by
$M_{t}:=\sum_{j:\; T_{j}\leq\left\lfloor t\right\rfloor }\Delta F(T_{j}),$
$t\geq0$, follows an $\widetilde{\mathbb{F}}$-martingale. To this
matter, notice that by construction $M_{t}=M_{T_{j}}=M_{s}$ whenever
$T_{j}\leq s<t\leq T_{j+1}$, and thus $\mathbb{E}^*[M_{t}|\widetilde{F}_{s}]=M_{s}$.
Now if instead we have $T_{j}\leq s<T_{j+1}\leq t<T_{j+2}$, then
$M_{t}=M_{ T_{j}}+\Delta F(T_{j+1})$. Again we have $M_{T_{j}}=M_{s}$.
Moreover we have
\begin{eqnarray*}
\mathbb{E}^*[\Delta F(T_{j+1})|\widetilde{F}_{s}] & = & \mathbb{E}^*\left[\mathbb{P}^*(\tau\leq T_{j+1}|\widetilde{\mathcal{F}}_{T_{j+1}})\Big|\widetilde{\mathcal{F}}_{s}\right]-\mathbb{E}\left[\mathbb{P}^*(\tau\leq T_{j+1}^{-}|\widetilde{\mathcal{F}}_{T_{j+1}^{-}})\Big|\widetilde{\mathcal{F}}_{s}\right]\\
 & = & \mathbb{E}^*\left[\mathbf{1}_{\{\tau\leq T_{j+1}\}}\Big|\widetilde{\mathcal{F}}_{s}\right]-\mathbb{E}^*\left[\mathbf{1}_{\{\tau\leq T_{j+1}^{-}\}}\Big|\widetilde{\mathcal{F}}_{s}\right]\\
 & = & \mathbb{E}^*\left[\mathbf{1}_{\{\tau\leq T_{j+1}\}}-\mathbf{1}_{\{\tau<T_{j+1}\}}\Big|\widetilde{\mathcal{F}}_{s}\right]\\
 & = & 0,
\end{eqnarray*}
where the last equation follows from the fact $\mathbf{1}_{\{\tau\leq T_{j+1}\}}=\mathbf{1}_{\{\tau<T_{j+1}\}}$
since $\tau$ is defined as the hitting time of a Brownian motion.

By construction, $(A_{t})_{t\geq0}$ is a continuous
process, thus it is predictable. In order to see that it is increasing
we proceed by contradiction: On the one hand, suppose that there exists
$t< s$ such that $A_{t}\geq A_{s}$ \emph{a.s.;} this would imply
that $\mathbb{E}^*[A_{t}]\geq\mathbb{E}^*[A_{s}]$. On the other hand,
since $(F_{t})_{t\geq0}$ is an $\widetilde{\mathbb{F}}$-submartingale
we have
\[
\mathbb{E}^*[A_{s}]=\mathbb{E}^*[F(s)]-\mathbb{E}^*[M_{s}]\geq\mathbb{E}^*[F(t)]-\mathbb{E}^*[M_{s}]=\mathbb{E}^*[F(t)]-\mathbb{E}^*[M_{t}]=\mathbb{E}^*[A_{t}],
\]
and thus the equation $\mathbb{E}^*[A_{s}]=\mathbb{E}^*[A_{t}]$ should
hold. However, this would imply that $\mathbb{E}^*[F_{s}]=\mathbb{E}^*[F_{t}]$,
or equivalently $\mathbb{P}^*(\tau\leq s)=\mathbb{P}^*(\tau\leq t)$,
which is impossible since $\tau$ follows a strictly increasing distribution
---namely the Inverse-Gaussian distribution.

\subsection{The fundamental process dynamics under $\mathbb{P}^{T}$ and $\mathbb{P}^{(S)}$}
The following result describes the dynamics of the processes involved in the pricing problem, both under the $T$-forward measure  and share measure $\mathbb{P}^{(S)}$.

\begin{prop}
\label{pro: PI general dynamics}Let $(Z_{t})_{t\geq0}$ be a second Brownian motion, independent of $(W_{t}^{*})_{t\geq0}$  so that we can write
\[
W_{t}^{\rho}=\rho(t)W_{t}^{*}+\sqrt{1-\rho^{2}(t)}Z_{t},\qquad t\geq0.
\]
Let $\mathbb{F}^{W*}=(\mathcal{F}_{t}^{W*})_{t\geq0}$ and let $\mathbb{F}^{Z}=(\mathcal{F}_{t}^{Z})_{t\geq0}$ be the natural filtrations of  $W*$ and $Z$ respectively. Let $(U_{t})_{t\geq0}$ be
the process defined in (\ref{eq: PI the fundamental process}). Then (i) the process $(W_{t}^{T}:=W_{t}^{*}-\int_{0}^{t}b(t,T)\mathrm{dt})_{t\geq0}$
(resp. $(W_{t}^{(S)}:=W_{t}^{*}-\int_{0}^{t}\sigma_{t}\mathrm{dt})_{t\geq0}$)
is an $\mathbb{F}^{W^{*}}$-Brownian motion under $\mathbb{P}^{T}$
(resp. $\mathbb{P}^{(S)}$); $(Z_{t})_{t\geq0}$ is an $\mathbb{F}^{Z}$-Brownian
motion both with respect to $\mathbb{P}^{T}$ and $\mathbb{P}^{(S)}$;
and
\[
(\rho(t)W_{t}^{T}+\sqrt{1-\rho^{2}(t)}Z_{t})_{t\geq0}\qquad\left(resp.\quad(\rho(t)W_{t}^{(S)}+\sqrt{1-\rho^{2}(t)}Z_{t})_{t\geq0}\right),
\]
is a $\mathbb{G}$-Brownian motion under $\mathbb{P}^{T}$ (resp.
$\mathbb{P}^{(S)}$).

(ii) The dynamics of $(U_{t})_{t\geq0}$ under $\mathbb{P}^{T}$
are
\begin{eqnarray}
\mathrm{d}U_{t} & = & \left(a\left\Vert \sigma_{t}\right\Vert ^{2}-\frac{1}{2}\left\Vert \sigma_{t}\right\Vert ^{2}+\frac{1}{2}\left\Vert b(t,T)\right\Vert ^{2}+\rho(t)(\sigma_{t}-b(t,T))\cdot b(t,T)\right)\mathrm{d}t\nonumber \\
 &  & +(\sigma_{t}-b(t,T))\cdot\left(\rho(t)\mathrm{d}W_{t}^{T}+\sqrt{1-\rho^{2}(t)}\mathrm{d}Z_{t}\right).\label{eq: PI general PT dynamics}
\end{eqnarray}

(iii) The dynamics of $(U_{t})_{t\geq0}$ under $\mathbb{P}^{(S)}$
are
\begin{eqnarray}
\mathrm{d}U_{t}& = & \left(a\left\Vert \sigma_{t}\right\Vert ^{2}-\frac{1}{2}\left\Vert \sigma_{t}\right\Vert ^{2}+\frac{1}{2}\left\Vert b(t,T)\right\Vert ^{2}+\rho(t)(\sigma_{t}-b(t,T))\cdot\sigma_{t}\right)\mathrm{d}t\nonumber \\
 &  & +(\sigma_{t}-b(t,T))\cdot\left(\rho(t)\mathrm{d}W_{t}^{(S)}+\sqrt{1-\rho^{2}(t)}\mathrm{d}Z_{t}\right).\label{eq: PI general P(S) dynamics}
\end{eqnarray}
\end{prop}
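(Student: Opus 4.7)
The proposition splits into three ingredients: a Girsanov computation identifying the Brownian motions under the two auxiliary measures, a L\'evy-characterization argument for the correlated combination, and a straightforward Ito substitution producing the drift corrections in the SDE for $(U_t)_{t\geq0}$. My plan is to treat (i) in two substeps, and then reduce (ii) and (iii) to one shared computation.

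For (i), I would begin by reading off the Radon-Nikod\'ym derivatives $\mathrm{d}\mathbb{P}^T/\mathrm{d}\mathbb{P}^*$ and $\mathrm{d}\mathbb{P}^{(S)}/\mathrm{d}\mathbb{P}^*$ already recorded in Section~\ref{sec: pricing problem}. Both are Dol\'eans-Dade exponentials driven by $(W_t^*)_{t\geq0}$, with integrands $b(\cdot,T)$ and $\sigma$ respectively, so Girsanov's theorem immediately gives that $W_t^T:=W_t^*-\int_0^t b(u,T)\mathrm{d}u$ is an $\mathbb{F}^{W^*}$-Brownian motion under $\mathbb{P}^T$, and similarly $W_t^{(S)}:=W_t^*-\int_0^t\sigma_u\mathrm{d}u$ is one under $\mathbb{P}^{(S)}$. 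Since $(Z_t)_{t\geq0}$ is independent of $(W_t^*)_{t\geq0}$ under $\mathbb{P}^*$ and the change of measure touches only $W^*$, the law of $Z$ is preserved: $Z$ remains an $\mathbb{F}^Z$-Brownian motion under both $\mathbb{P}^T$ and $\mathbb{P}^{(S)}$, and it stays independent of the new Brownian motion. L\'evy's characterization then applies to the combination $\rho(t)W_t^T+\sqrt{1-\rho^2(t)}Z_t$: its quadratic variation is $\int_0^t(\rho^2(u)+(1-\rho^2(u)))\mathrm{d}u=t$, and it is a continuous local martingale in the enlarged filtration, so it is Brownian. The $\mathbb{G}$-statement follows once we note that the additional information in $\mathbb{G}$ is the observation of $\mathbf{1}_{\{\tau\le s\}}$, and an immersion-type argument (as in \cite{JeaValc}) shows that $\mathbb{F}^{W^*}\vee\mathbb{F}^Z$-martingales remain $\mathbb{G}$-martingales here since $\tau$ is defined as a hitting time of a process adapted to that filtration.

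For (ii) and (iii), I would rewrite the driving noise of $(U_t)_{t\geq0}$ in terms of the new Brownian motions. Viewing the relation $W^\rho=\rho W^*+\sqrt{1-\rho^2}Z$ in its SDE form $\mathrm{d}W_t^\rho=\rho(t)\mathrm{d}W_t^*+\sqrt{1-\rho^2(t)}\mathrm{d}Z_t$, I substitute $\mathrm{d}W_t^*=\mathrm{d}W_t^T+b(t,T)\mathrm{d}t$ (respectively $\mathrm{d}W_t^*=\mathrm{d}W_t^{(S)}+\sigma_t\mathrm{d}t$) into (\ref{eq: PI the fundamental process}). The stochastic part becomes $(\sigma_t-b(t,T))\cdot(\rho(t)\mathrm{d}W_t^T+\sqrt{1-\rho^2(t)}\mathrm{d}Z_t)$, while the drift acquires the extra term $\rho(t)(\sigma_t-b(t,T))\cdot b(t,T)\mathrm{d}t$ under $\mathbb{P}^T$ and $\rho(t)(\sigma_t-b(t,T))\cdot\sigma_t\mathrm{d}t$ under $\mathbb{P}^{(S)}$. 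Collecting terms reproduces (\ref{eq: PI general PT dynamics}) and (\ref{eq: PI general P(S) dynamics}) respectively.

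The genuinely delicate point is the $\mathbb{G}$-Brownian motion claim rather than the Ito bookkeeping in (ii)--(iii), which is mechanical. Specifically, establishing immersion (equivalently, that every $\widetilde{\mathbb{F}}$-martingale remains a $\mathbb{G}$-martingale after the progressive enlargement by $\tau$) is what justifies L\'evy's characterization in the enlarged filtration. I would handle this by appealing to the standard result for hitting times of filtration-adapted diffusions, or by arguing directly that the $\widetilde{\mathbb{F}}$-compensator computed in Lemma~\ref{pro: compensator} is also the $\mathbb{G}$-compensator, which is the content one needs. Everything else reduces to bookkeeping with Girsanov and Ito's formula.
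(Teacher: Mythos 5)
Your proposal follows essentially the same route as the paper's own proof: Girsanov applied to the Dol\'eans--Dade densities of $\mathbb{P}^{T}$ and $\mathbb{P}^{(S)}$ to identify $W^{T}$ and $W^{(S)}$, independence of $Z$ from $W^{*}$ together with L\'evy's characterization for $\rho(t)\mathrm{d}W_{t}^{T}+\sqrt{1-\rho^{2}(t)}\mathrm{d}Z_{t}$ (resp.\ with $W^{(S)}$), and then the mechanical substitution $\mathrm{d}W_{t}^{*}=\mathrm{d}W_{t}^{T}+b(t,T)\mathrm{d}t$ (resp.\ $+\,\sigma_{t}\mathrm{d}t$) into (\ref{eq: PI the fundamental process}) to produce the drift corrections in (ii)--(iii). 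The only addition is your immersion discussion for the $\mathbb{G}$-statement, which the paper leaves implicit; it is harmless, and indeed since $\tau$ is a hitting time of a process adapted to $\mathbb{F}^{W^{*}}\vee\mathbb{F}^{Z}$, one has $\mathcal{G}_{t}\subseteq\mathcal{F}_{t}^{W^{*}}\vee\mathcal{F}_{t}^{Z}$, so no genuine enlargement problem arises.
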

\begin{proof}
\textit{\emph{The statement about }}$(W_{t}^{T})_{t\geq0}$ and
$(W_{t}^{(S)})_{t\geq0}$\textit{\emph{ follows from the Girsanov
theorem. }}Due to the independence (under $\mathbb{P}^{*}$) between
$(W_{t}^{*})_{t\geq0}$ and $(Z_{t})_{t\geq0}$, it is easy to see
that the L\'evy characterization of the Brownian motion applies to $(Z_{t})_{t\geq0}$
and, subsequently, it also applies to $(\rho(t)W_{t}^{T}-\sqrt{1-\rho^{2}(t)}Z_{t})_{t\geq0}$
and $(\rho(t)W_{t}^{(S)}-\sqrt{1-\rho^{2}(t)}Z_{t})_{t\geq0}$. In
light of (\emph{i}), it only remains to rewrite the dynamics in (\ref{eq: PI log-leverage process})
using the newly defined Brownian motions.\end{proof}

Notice that taking a constant interest rate implies that $\mathbb{P}^*$ and $\mathbb{P}^T$ coincide.

\subsection{Conditional densities}
\begin{thm} \label{the result}
Let $(X_t)_{t\geq0}$, $(Y_t)_{t\geq0}$, $(Z_t)_{t\geq0}$ be three drifted Brownian motions under $\mathbb{P}$, with $(Z_t)_{t\geq0}$ independent of $(Y_t)_{t\geq0}$, and such that  $X_t=Y_t+Z_t$. Let $0=t_0\leq t_1 \leq ...\leq t_n=t$ be a partition of $[0,t]$. Set the notations $X_{j}:=X_{t_j}$, $X^{(n)}=(X_{0},X_{1},..,X_{n})$, $x^{(n)}=(x_{0},x_{1},..,x_{n})$, and analogously for $Y$ and $Z$. Set $\tau =\inf \{t,X_{t}\leq c\}$.
Then the following equation holds true
\begin{eqnarray*}
\lefteqn{\mathbb{P}\left(\tau>t_{n},X^{(n)}\in\mathrm{d}x^{(n)},Y^{(n)}\in\mathrm{d}y^{(n)}\right)} \\
 & = & \prod\limits _{k=1}^{n}\left(1-\exp\left\{ -\frac{2(x_{k}-c)(x_{k-1}-c)}{\sigma_{X}^{2}(t_{k}-t_{k-1})}\right\} \right)\\
 &  & \qquad\times f_{\Delta Z_{k}}(x_{k}-y_{k}-(x_{k-1}-y_{k-1)}))f_{\Delta Y_{k}}(y_{k}-y_{k-1})
\end{eqnarray*}
where $\sigma^2_X:=\mathbb{E}[(X_1-\mathbb{E}[X_1])^2]$,  and the functions $f_{\Delta Z_{k}}$ and $f_{\Delta Y_{k}}$ denote the density of $\Delta Z_{k}:=Z_{k}-Z_{k-1}$ and $\Delta Y_{k}:=Y_{k}-Y_{k-1}$, respectively.
\end{thm}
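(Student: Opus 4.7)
The argument splits into two independent ingredients: (a) the joint density of $(X^{(n)},Y^{(n)})$ factorises over the partition because $(X,Y)$ has independent increments and $Y\perp Z$; and (b) the indicator $\textbf{1}_{\{\tau>t_{n}\}}$ depends only on the $X$-path, so after conditioning on $X^{(n)}$ it factorises into $n$ Brownian-bridge survival probabilities via the Markov property of $X$. Combining the two factorisations produces the announced identity.

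\textbf{Step 1 (joint density of $(X^{(n)},Y^{(n)})$).} On each interval $[t_{k-1},t_{k}]$ I would write $\Delta X_{k}=\Delta Y_{k}+\Delta Z_{k}$ and perform the change of variables $(\Delta Y_{k},\Delta Z_{k})\mapsto(\Delta X_{k},\Delta Y_{k})$, whose Jacobian is $1$. Since $Y$ and $Z$ are independent drifted Brownian motions, the pair $(\Delta Y_{k},\Delta Z_{k})$ has product density, which after the change of variables becomes
\begin{equation*}
f_{\Delta Y_{k}}(y_{k}-y_{k-1})\,f_{\Delta Z_{k}}\bigl((x_{k}-y_{k})-(x_{k-1}-y_{k-1})\bigr).
\end{equation*}
Because $(X,Y)$ has independent increments, multiplying across $k=1,\dots,n$ yields the full joint density of $(X^{(n)},Y^{(n)})$.

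\textbf{Step 2 (inserting $\{\tau>t_{n}\}$).} I would condition:
\begin{equation*}
\mathbb{P}\bigl(\tau>t_{n},X^{(n)}\in\mathrm{d}x^{(n)},Y^{(n)}\in\mathrm{d}y^{(n)}\bigr)=\mathbb{P}\bigl(\tau>t_{n}\,\big|\,X^{(n)}=x^{(n)}\bigr)\,f_{X^{(n)},Y^{(n)}}(x^{(n)},y^{(n)})\,\mathrm{d}x^{(n)}\mathrm{d}y^{(n)},
\end{equation*}
using that $\{\tau>t_{n}\}$ is a functional of the $X$-path alone. The Markov property of $X$ makes the bridges $(X_{s})_{s\in[t_{k-1},t_{k}]}$ independent once $X_{t_{0}},\dots,X_{t_{n}}$ are fixed, so
\begin{equation*}
\mathbb{P}\bigl(\tau>t_{n}\,\big|\,X^{(n)}=x^{(n)}\bigr)=\prod_{k=1}^{n}\mathbb{P}\Bigl(\min_{s\in[t_{k-1},t_{k}]}X_{s}>c\,\Big|\,X_{t_{k-1}}=x_{k-1},X_{t_{k}}=x_{k}\Bigr).
\end{equation*}
Each factor, being the probability that a Brownian bridge from $x_{k-1}$ to $x_{k}$ of length $t_{k}-t_{k-1}$ and variance parameter $\sigma_{X}^{2}$ stays above $c$, equals $1-\exp\{-2(x_{k}-c)(x_{k-1}-c)/(\sigma_{X}^{2}(t_{k}-t_{k-1}))\}$ whenever $x_{k-1},x_{k}>c$ (and is zero otherwise, consistent with the formula). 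Multiplying Step 1 and Step 2 delivers the stated expression.

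\textbf{Expected obstacle.} The computational pieces (change of variables, Markov property, product of densities) are routine; the one point requiring care is that $X$ carries a drift, whereas the classical Brownian-bridge non-crossing formula is usually stated for a driftless Brownian motion. This is harmless because the law of a drifted Brownian motion conditional on its two endpoints coincides with the law of a driftless Brownian bridge with the same endpoints and variance: writing $X_{s}-((t_{k}-s)x_{k-1}+(s-t_{k-1})x_{k})/(t_{k}-t_{k-1})$ for $s\in[t_{k-1},t_{k}]$ produces a centred Brownian bridge, and the event of crossing $c$ transforms accordingly. Equivalently, a Girsanov change of measure removes the drift while preserving the conditioning on the endpoints. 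Once this is noted, the rest of the argument is a straightforward assembly.
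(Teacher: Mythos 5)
Your outline parallels the paper's proof more than it differs from it: the paper conditions sequentially (using the Markov property of $(\inf_{s\le t}X_s,X_t,Y_t)$ and Bayes' rule) while you factorise the joint density of $(X^{(n)},Y^{(n)})$ globally first, but both arguments come down to the same two per-interval facts, and your Step 1 and the drifted-bridge remark are correct. The problem is the first display of Step 2. The identity $\mathbb{P}(\tau>t_n\mid X^{(n)}=x^{(n)},Y^{(n)}=y^{(n)})=\mathbb{P}(\tau>t_n\mid X^{(n)}=x^{(n)})$ is the crux of the whole theorem, and the justification you give --- that $\{\tau>t_n\}$ is a functional of the $X$-path alone --- does not establish it. The event $\{\tau>t_n\}$ is measurable with respect to the whole path of $X$, not with respect to $\sigma(X^{(n)})$, so to discard $Y^{(n)}$ from the conditioning you must show that, given $X^{(n)}$, the interior of the $X$-path is conditionally independent of $Y^{(n)}$; the Markov property of $X$ says nothing about whether the $Y$-observations carry extra information about the excursions of $X$ between grid points. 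This is exactly where the structural hypothesis ($X=Y+Z$ with $Y$ and $Z$ independent Brownian motions) must be used a second time, beyond the density factorisation of Step 1; in the paper this step is handled by rewriting the conditioning through the auxiliary variable $V:=X-\mathbb{E}(X)-\frac{\mathrm{Cov}(X,Z)}{\mathrm{Var}(Z)}(Z-\mathbb{E}(Z))$ so that only the $X$-endpoints remain, and then invoking Lemma 2 of Jeanblanc--Valchev. Your ``expected obstacle'' paragraph worries only about the harmless drift issue and misses this one.

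The missing argument is short but must appear. By the independent increments of the pair $(Y,Z)$, conditionally on all grid values of $X$ and $Y$ the interior pieces over the intervals $[t_{k-1},t_k]$ are independent, and on each interval the conditional law of the $X$-path is that of the sum of a $Y$-bridge from $y_{k-1}$ to $y_k$ and an independent $Z$-bridge from $x_{k-1}-y_{k-1}$ to $x_k-y_k$. A covariance computation shows that this sum is a Gaussian process whose mean is the linear interpolation between $x_{k-1}$ and $x_k$ (the drifts drop out, the bridge means add up to the $X$-interpolation) and whose covariance is $(\sigma_Y^2+\sigma_Z^2)\bigl(s\wedge u-\frac{su}{t_k-t_{k-1}}\bigr)=\sigma_X^2\bigl(s\wedge u-\frac{su}{t_k-t_{k-1}}\bigr)$ in interval time, i.e. exactly an $X$-bridge determined by $x_{k-1},x_k$ alone; only then does the dependence on the $y$'s disappear and does your product of bridge non-crossing probabilities become legitimate. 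With that conditional-independence statement supplied, the rest of your assembly (change of variables, bridge survival formula with the indicator $\mathbf{1}_{\{x_{k-1}>c,\,x_k>c\}}$) goes through and yields the theorem.
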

\begin{proof}
We have that
\begin{eqnarray}
\lefteqn{\mathbb{P}\left( \tau >t_{n},X_{n}^{(n)}\in \mathrm{d}x^{(n)},Y^{(n)}\in
\mathrm{d}y^{(n)}\right)}   \nonumber \\
&=&\mathbb{P}\left( \tau >t_{n},X_{n}\in \mathrm{d}x_{n},Y_{n}\in \mathrm{d}%
y_{n},X^{(n-1)}\in \mathrm{d}x^{(n-1)},Y^{(n-1)}\in \mathrm{d}%
y^{(n-1)}\right)   \nonumber \\
&=&\mathbb{P}\left( \tau >t_{n},X_{n}\in \mathrm{d}x_{n},Y_{n}\in
\mathrm{d}y_{n}\Big\vert \tau >t_{n-1},X^{(n-1)}\in \mathrm{d}%
x^{(n-1)},Y^{(n-1)}\in \mathrm{d}y^{(n-1)}\right)   \nonumber \\
&&\times \mathbb{P}\left( \tau >t_{n-1},X^{(n-1)}\in \mathrm{d}%
x^{(n-1)},Y^{(n-1)}\in \mathrm{d}y^{(n-1)}\right)   \nonumber \\
&=&\mathbb{P}\left( \tau >t_{n},X_{n}\in \mathrm{d}x_{n},Y_{n}\in
\mathrm{d}y_{n}\Big\vert \tau >t_{n-1},X_{n-1}\in \mathrm{d}%
x_{n-1},Y_{n-1}\in \mathrm{d}y_{n-1}\right)   \nonumber \\
&&\times \mathbb{P}\left( \tau >t_{n-1},X^{(n-1)}\in \mathrm{d}%
x^{(n-1)},Y^{(n-1)}\in \mathrm{d}y^{(n-1)}\right) ,  \label{joint1}
\end{eqnarray}%
where the last equality is due to the Markovianity of $\left( \inf_{0\leq
s\leq t}X_{s},X_{t},Y_{t}\right) _{t\geq 0}$. Now, by the Bayes rule
\begin{eqnarray*}
\lefteqn{\mathbb{P}\left( \tau >t_{n},X_{n}\in \mathrm{d}x_{n},Y_{n}\in
\mathrm{d}y_{n}\Big\vert \tau >t_{n-1},X_{n-1}\in \mathrm{d}%
x_{n-1},Y_{n-1}\in \mathrm{d}y_{n-1}\right)}  \\
&=&\frac{\mathbb{P}\left( \tau >t_{n},X_{n}\in \mathrm{d}x_{n},Y_{n}\in
\mathrm{d}y_{n},X_{n-1}\in \mathrm{d}x_{n-1},Y_{n-1}\in \mathrm{d}%
y_{n-1}\right) }{\mathbb{P}\left( \tau >t_{n-1},X_{n-1}\in \mathrm{d}%
x_{n-1},Y_{n}\in \mathrm{d}y_{n}\right) } \\
&=&\mathbb{P}\left( \tau >t_{n}\Big\vert\tau >t_{n-1},X_{n}\in \mathrm{d}%
x_{n},Y_{n}\in \mathrm{d}y_{n},X_{n-1}\in \mathrm{d}x_{n-1},Y_{n-1}\in
\mathrm{d}y_{n-1}\right)  \\
&&\times \frac{\mathbb{P}\left( \tau >t_{n-1},X_{n}\in \mathrm{d}%
x_{n},Y_{n}\in \mathrm{d}y_{n},X_{n-1}\in \mathrm{d}x_{n-1},Y_{n-1}\in
\mathrm{d}y_{n-1}\right) }{\mathbb{P}\left( \tau >t_{n-1},X_{n-1}\in \mathrm{%
d}x_{n-1},Y_{n}\in \mathrm{d}y_{n}\right) } \\
&=&\mathbb{P}\left( \tau >t_{n}\Big\vert\tau >t_{n-1},X_{n}\in \mathrm{d}%
x_{n},Y_{n}\in \mathrm{d}y_{n},X_{n-1}\in \mathrm{d}x_{n-1},Y_{n-1}\in
\mathrm{d}y_{n-1}\right)  \\
&&\times \mathbb{P}\left( \tau >t_{n-1},X_{n}\in \mathrm{d}x_{n},Y_{n}\in
\mathrm{d}y_{n}\Big\vert\tau >t_{n-1},X_{n-1}\in \mathrm{d}x_{n-1},Y_{n-1}\in
\mathrm{d}y_{n-1}\right)  \\
&=&\mathbb{P}\left( \tau >t_{n}\Big\vert\tau >t_{n-1},X_{n}\in \mathrm{d}%
x_{n},Y_{n}\in \mathrm{d}y_{n},X_{n-1}\in \mathrm{d}x_{n-1},Y_{n-1}\in
\mathrm{d}y_{n-1}\right)  \\
&&\times \mathbb{P}\left( X_{n}\in \mathrm{d}x_{n},Y_{n}\in \mathrm{d}%
y_{n} \Big\vert X_{n-1}\in \mathrm{d}x_{n-1},Y_{n-1}\in \mathrm{d}y_{n-1}\right) .
\end{eqnarray*}%
The last equality is true because $\left( X_{t},Y_{t}\right) _{t\geq 0}$ is Markovian. By hypothesis
\begin{eqnarray}
\lefteqn{\mathbb{P}\left( X_{n}\in \mathrm{d}x_{n},Y_{n}\in \mathrm{d}y_{n}\Big\vert X_{n-1}\in \mathrm{d}x_{n-1},Y_{n-1}\in \mathrm{d}y_{n-1}\right)}
\nonumber \\
&=&f_{\Delta Z_{n}}(x_{n}-y_{n}-(x_{n-1}-y_{n-1}))f_{\Delta
Y_{n}}(y_{n}-y_{n-1}).  \label{joint2}
\end{eqnarray}%
Then, if we take $V:=X-\mathbb{E}(X)-%
\frac{\mathrm{Cov}(X,Z)}{\mathrm{Var}(Z)}(Z-\mathbb{E}(Z))$
\begin{eqnarray}
\lefteqn{\mathbb{P}\left( \tau >t_{n}\Big\vert \tau >t_{n-1},X_{n}\in \mathrm{d}x_{n},Y_{n}\in \mathrm{d}y_{n},X_{n-1}\in \mathrm{d}x_{n-1},Y_{n-1}\in
\mathrm{d}y_{n-1}\right)}   \nonumber \\
&=&\mathbb{P}\left( \tau >t_{n}\Big\vert\tau >t_{n-1},X_{n}\in \mathrm{d}%
x_{n},V_{n}\in \mathrm{d}v_{n},X_{n-1}\in \mathrm{d}x_{n-1},V_{n-1}\in
\mathrm{d}v_{n-1}\right)   \nonumber \\
&=&\mathbb{P}\left( \tau >t_{n}\Big\vert\tau >t_{n-1},X_{n}\in \mathrm{d}%
x_{n},X_{n-1}\in \mathrm{d}x_{n-1}\right)   \nonumber \\
&=&\left(1-\exp \left\{ -\frac{2(x_{n}-c)(x_{n-1}-c)}{\sigma
_{X}^{2}(t_{n}-t_{n-1})}\right\}\right)\textbf{1}_{\{x_n>c,x_{n-1}>c\}} .  \label{joint3}
\end{eqnarray}%
The last equality is a very well known result, see for instance \cite[Lemma 2]{JeaValc}. So, from (\ref{joint1}), (\ref{joint2}) and (%
\ref{joint3}) we get the result.
\end{proof}
The densities $f_{\Delta Y_{n}}$ and $f_{\Delta Z_{n}}$ in (\ref{joint2}) can be computed in a straightforward manner. Indeed, if we define $\mu_{Y}:=\mathbb{E}[Y_{1}]$ and $\sigma_{Y}:=\mathbb{E}[(Y_{1}-\mathbb{E}[Y_{1}])^{2}]$, then the density $f_{\Delta Y_{n}}$ corresponds to a Gaussian density
with mean $\mu_{Y}(t_{n}-t_{n-1})$ and variance $\sigma_{Y}^{2}(t_{n}-t_{n-1})$,
and so
\[
f_{\Delta Y_{n}}(y)=\frac{1}{\sqrt{2\pi\sigma_{Y}^{2}(t_{n}-t_{n-1})}}\exp\left\{ -\frac{\left(y-\mu_{Y}(t_{n}-t_{n-1})\right)^{2}}{2\sigma_{Y}^{2}(t_{n}-t_{n-1})}\right\} .
\]
The expression for $f_{\Delta Z_{n}}$ is obtained analogously.

\bibliographystyle{plain}
\bibliography{references}

\end{document}